\newtheorem{observation}{Observation}
\title{A Short Note on the Jensen-Shannon Divergence between Simple Mixture Distributions}
\author{Bernhard C. Geiger
\thanks{Bernhard C. Geiger (geiger@ieee.org) is with the
 Know-Center GmbH, Graz, Austria.}
\thanks{The work of Bernhard C. Geiger has partly been funded by the Erwin Schr\"odinger Fellowship J 3765 of the Austrian Science Fund. The Know-Center is funded within the Austrian COMET Program - Competence Centers for Excellent Technologies - under the auspices of the Austrian Federal Ministry of Transport, Innovation and Technology, the Austrian Federal Ministry of Digital and Economic Affairs, and by the State of Styria. COMET is managed by the Austrian Research Promotion Agency FFG.}
}
\newcommand{\dom}[1]{\mathcal{#1}}
\newcommand{\pt}{\tilde{p}}
\newcommand{\kld}[2]{\mathbb{D}_{KL}(#1\Vert #2)}
\newcommand{\jsd}[3]{\mathbb{D}_{JS,#3}(#1,#2)}
\newcommand{\sjsd}[2]{\mathbb{D}_{JS}(#1,#2)}
\newcommand{\ent}[1]{H\left(#1\right)}
\newcommand{\supp}[1]{\mathrm{supp}(#1)}
\newcommand{\diff}{\mathrm{d}}
\begin{document}
\maketitle

\begin{abstract}
This short note presents results about the symmetric Jensen-Shannon divergence between two discrete mixture distributions $p_1$ and $p_2$. Specifically, for $i=1,2$, $p_i$ is the mixture of a common distribution $q$ and a distribution $\pt_i$ with mixture proportion $\lambda_i$. In general, $\pt_1\neq \pt_2$ and $\lambda_1\neq\lambda_2$. We provide experimental and theoretical insight to the behavior of the symmetric Jensen-Shannon divergence between $p_1$ and $p_2$ as the mixture proportions or the divergence between $\pt_1$ and $\pt_2$ change. We also provide insight into scenarios where the supports of the distributions $\pt_1$, $\pt_2$, and $q$ do not coincide.
\end{abstract}

\section*{Motivation}
Suppose there are three types of dice (red, blue, and green), each of which is characterized by a specific probability distribution over its faces. Suppose further that there are two urns (A and B), of which A contains only red and green dice, while B contains only blue and green dice. The proportion of red dice in A and of blue dice in B shall be known. Suppose finally that a player randomly chooses one of the urns, picks a die from the chosen urn at random, and rolls the die. You only observe the outcome of the die roll. In the problem of guessing, given only the observed die roll, the urn from which the die was chosen, the Jensen-Shannon (JS) divergence plays a significant role in bounds on the probability of guessing correctly. In this short note we evaluate the JS divergence as a function of the probability distribution over the faces of the three types of dice, and as a function of the proportion of dice of a given color in the respective urn.

\section{Notation and Assumptions}
We consider probability mass functions (PMFs) on a common finite alphabet $\dom{X}$, i.e., all PMFs in this work are $\dom{X}\to[0,1]$. Let $\supp{\cdot}$ denote the support of a PMF; e.g., for PMF $r$, $\supp{r}:=\{x\in\dom{X}{:}\ r(x)>0\}\subseteq\dom{X}$.

We denote the entropy of a discrete random variable (RV) with PMF $r$ and the Kullback-Leibler divergence between two PMFs $r_1$ and $r_2$ as
\begin{subequations}
\begin{equation}
 \ent{r} := -\sum_{x\in\supp{r}} r(x) \log r(x)
\end{equation}
and~\cite[p.~18]{Cover_Information}
\begin{equation}
 \kld{r_1}{r_2} := \sum_{x\in\supp{r_1}} r_1(x) \log \frac{r_1(x)}{r_2(x)}
\end{equation}
respectively, where $\log$ denotes the natural logarithm. The JS divergence between two PMFs $r_1$ and $r_2$ and with a weight $0\le \pi\le 1$ is defined as~\cite[eq.~(4.1)]{Lin_JSDivergence}
\begin{align}\label{eq:JSD}
 \jsd{r_1}{r_2}{\pi} &:= \pi \kld{r_1}{r_M} + (1-\pi)\kld{r_2}{r_M}\notag\\
 &=\ent{r_M} - \pi\ent{r_1} - (1-\pi)\ent{r_2}
\end{align}
where
\begin{equation}
 r_M := \pi r_1 + (1-\pi) r_2.
\end{equation}
For the sake of simplicity, we assume that $\dom{X}=\supp{r_M}$.
\end{subequations}
JS divergence admits an operational characterization in binary classification. Specifically, let $\pi$ be the prior probability of class 1, and let $1-\pi$ be the prior probability of class 2. Let further $r_i$ be the feature distribution under class $i$. Then, JS divergence appears in upper and lower bounds on the error probability $P_e$ of feature-based classification, i.e.,~\cite[Th.~4~\&~5]{Lin_JSDivergence}
\begin{equation*}
 \frac{\left(h_2(\pi)-\jsd{r_1}{r_2}{\pi}\right)^2}{4} \le P_e \le \frac{h_2(\pi)-\jsd{r_1}{r_2}{\pi}}{2}
\end{equation*}
where $h_2(x):=-x\log x-(1-x)\log(1-x)$.

\section{JS Divergence between Simple Mixture Distributions}

\begin{figure*}[t]
 \begin{subfigure}[c]{0.3\textwidth}
  \includegraphics[width=\textwidth]{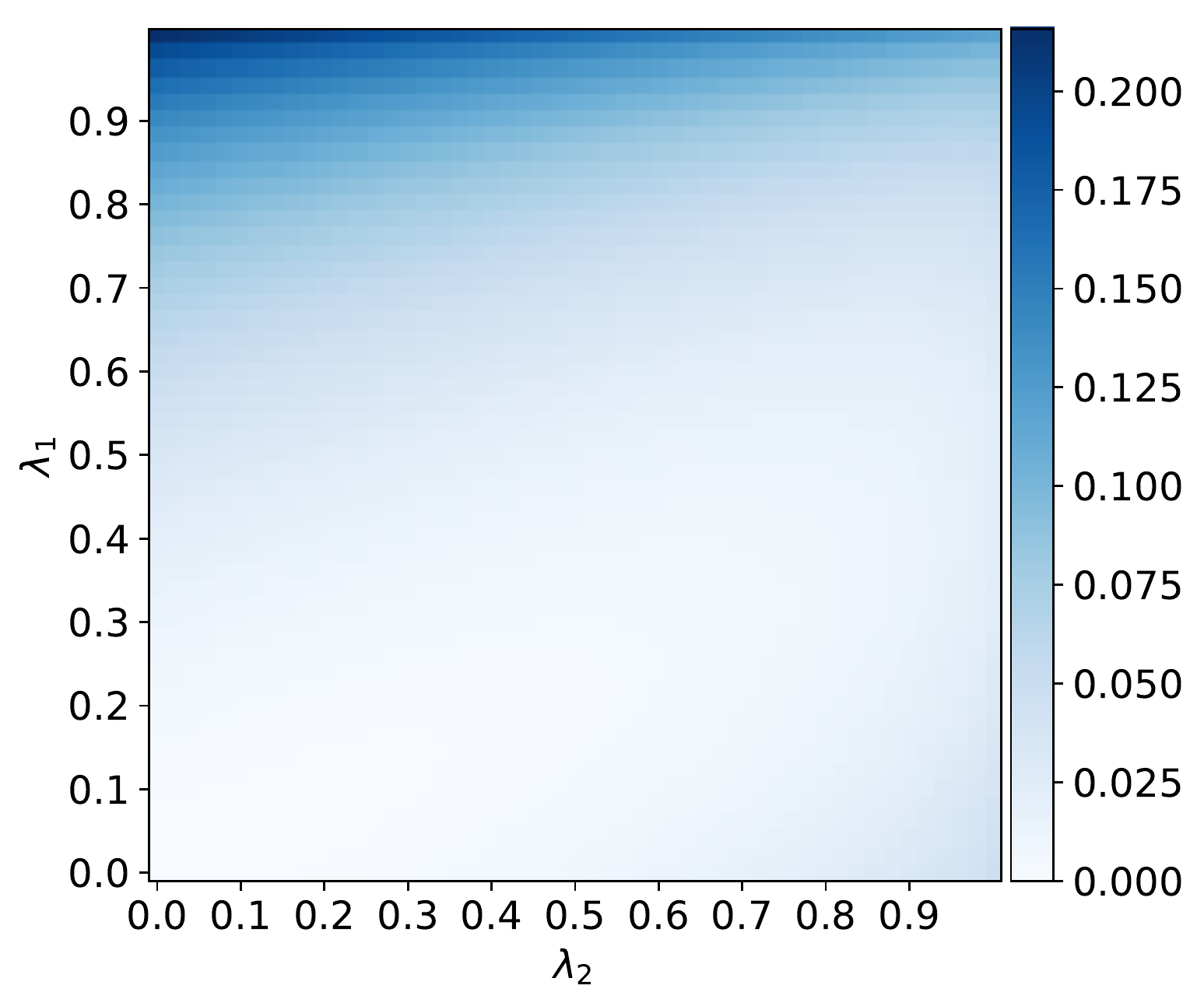}
  \caption{}
  \label{fig:gen_imshow}
 \end{subfigure}
 \hfill
 \begin{subfigure}[c]{0.3\textwidth}
 \includegraphics[width=\textwidth]{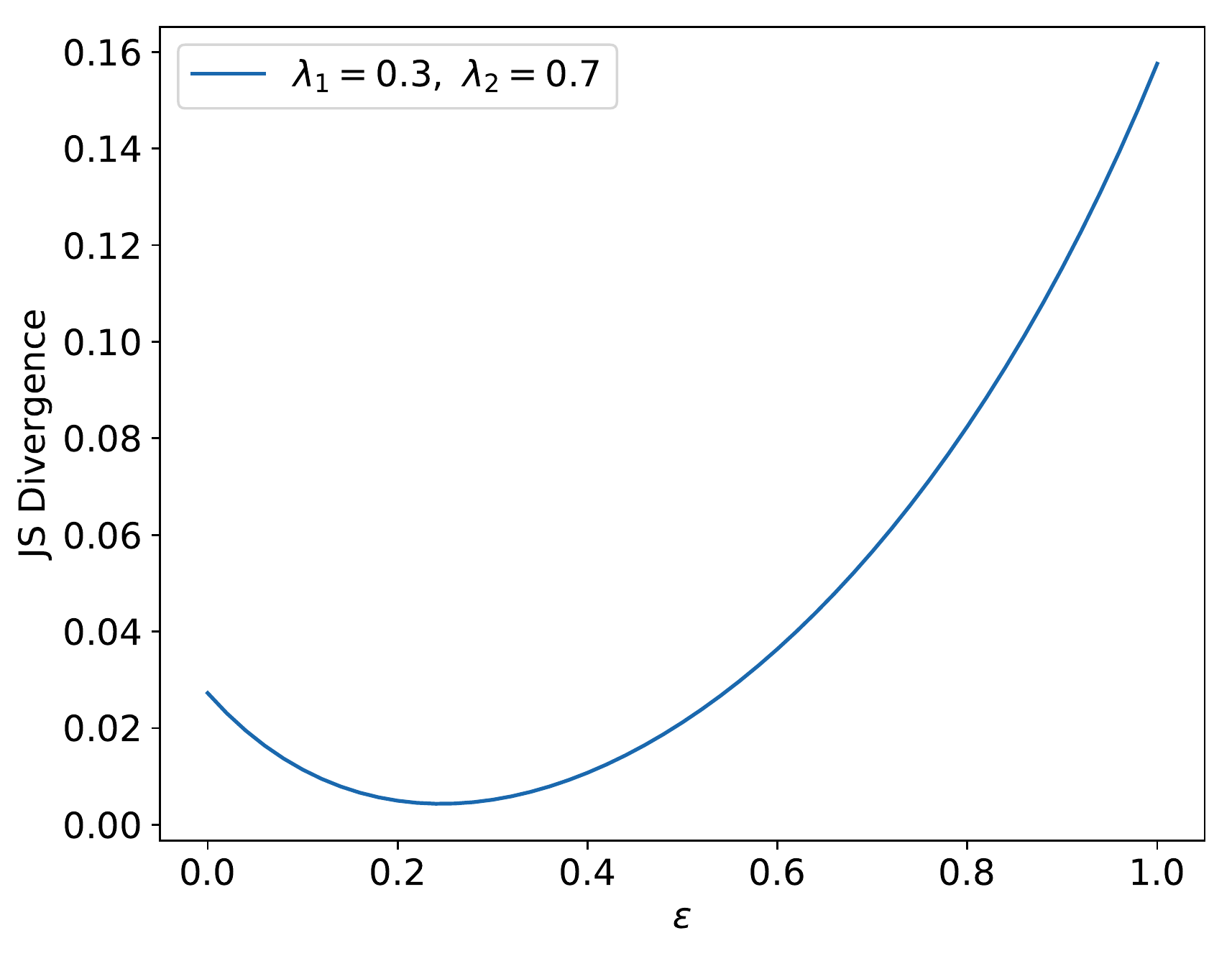}
 \caption{}
 \label{fig:error}
 \end{subfigure}\hfill
 \begin{subfigure}[c]{0.3\textwidth}
 \includegraphics[width=\textwidth]{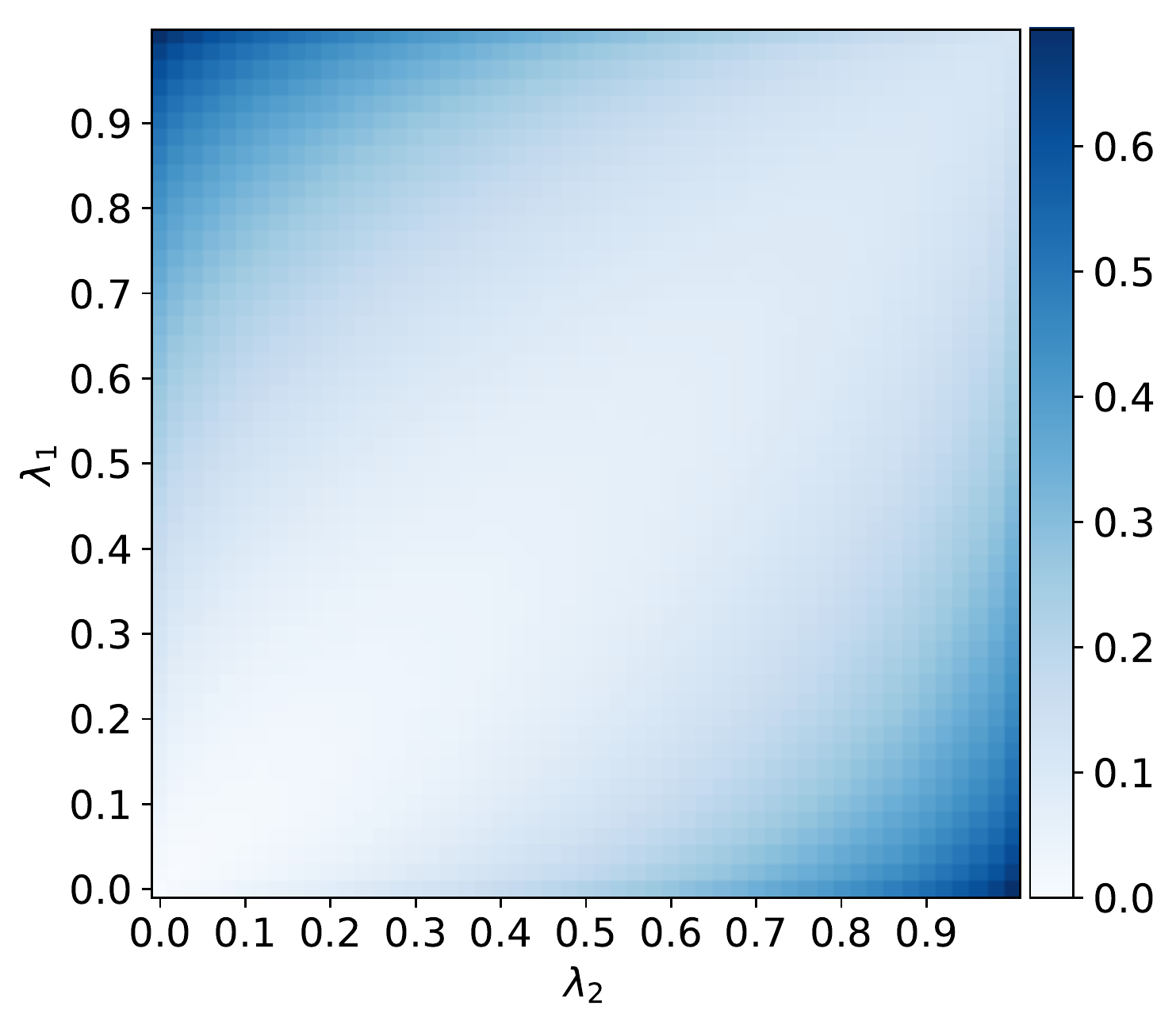}
 \caption{}
  \label{fig:PnonQ_imshow}
 \end{subfigure}\\
  \begin{subfigure}[c]{0.3\textwidth}
 \includegraphics[width=\textwidth]{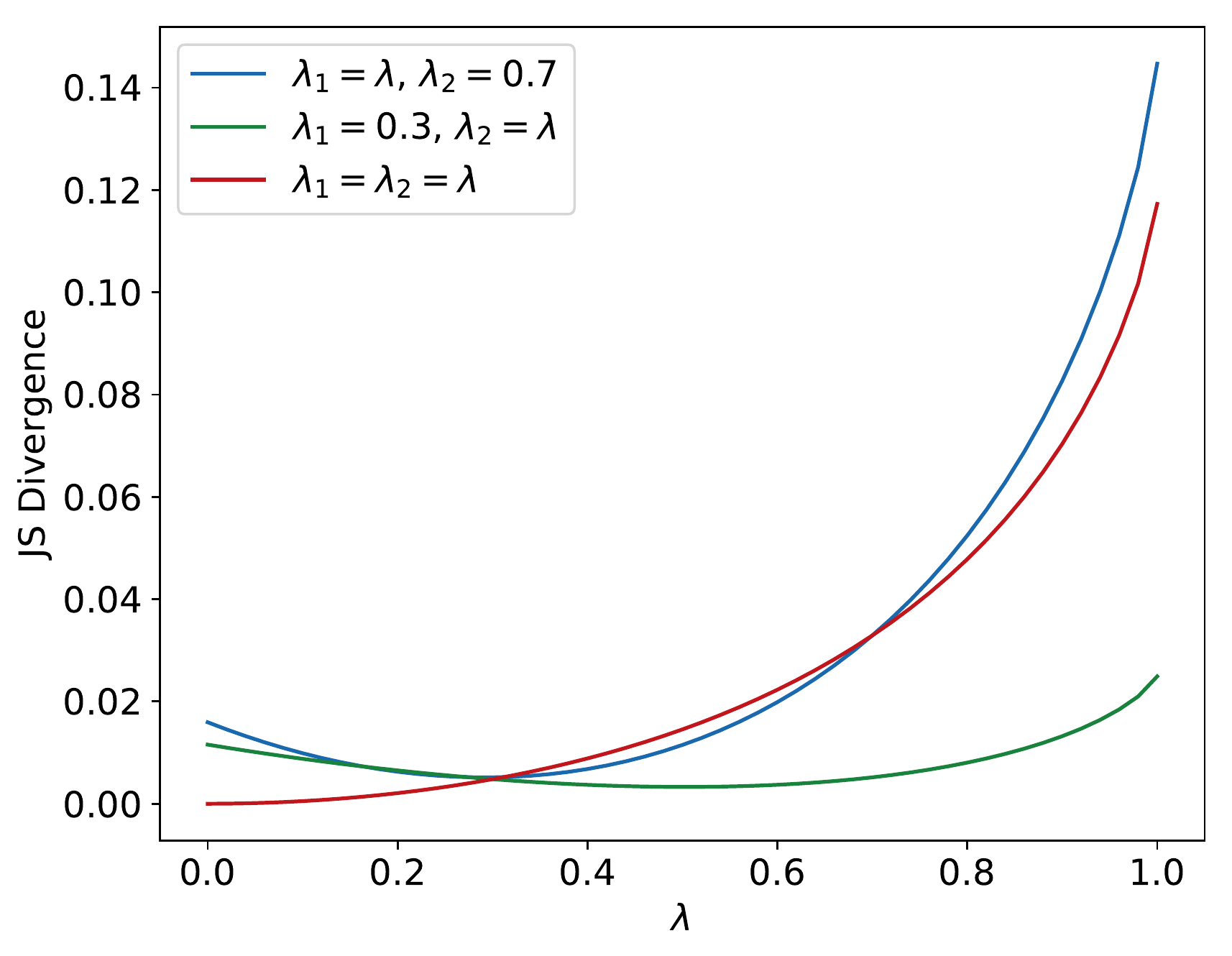}
 \caption{}
  \label{fig:gen_curves}
 \end{subfigure}\hfill
  \begin{subfigure}[c]{0.3\textwidth}
 \includegraphics[width=\textwidth]{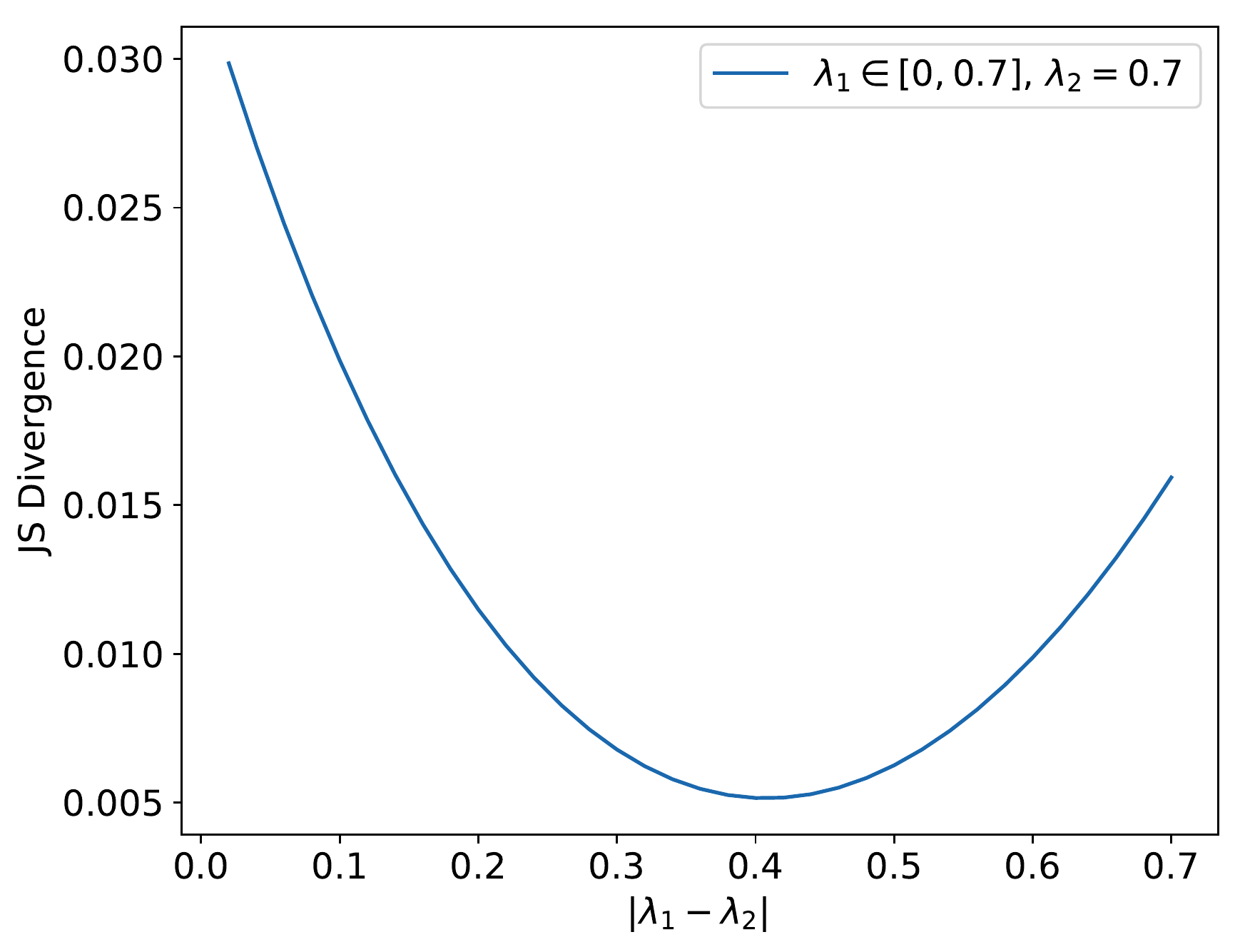}
 \caption{}
  \label{fig:relativeSize}
 \end{subfigure}\hfill
  \begin{subfigure}[c]{0.3\textwidth}
 \includegraphics[width=\textwidth]{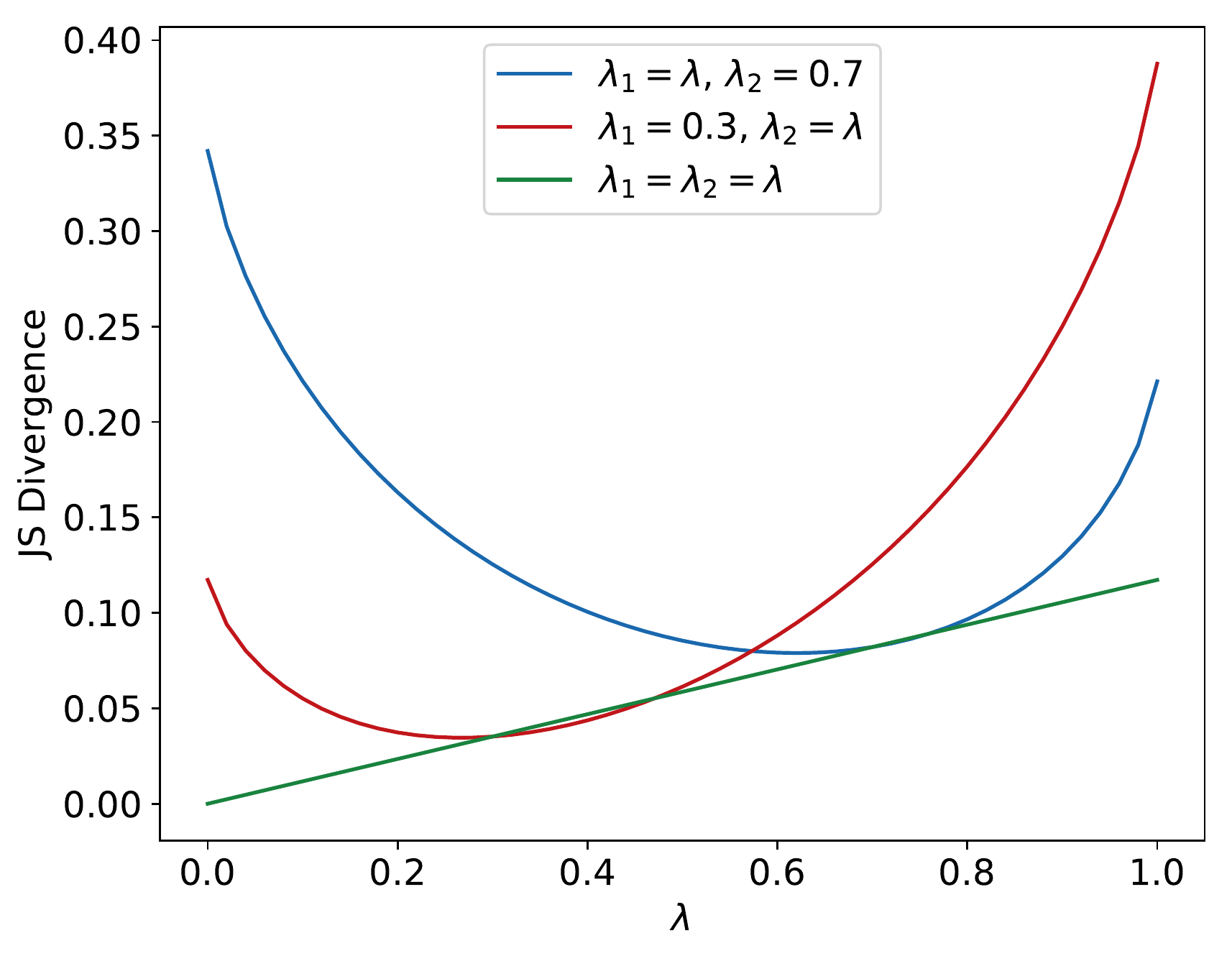}
 \caption{}
  \label{fig:PnonQ_curves}
 \end{subfigure}
 \caption{(a),(c) Symmetric JS divergence between $p_1$ and $p_2$ as a function of the mixture proportions. (d),(f) Colorplots (a),(c) evaluated at specific lines in $[0,1]^2$. (b) and (e) are obtained by evaluating the colorplot in (a) for different values of $\varepsilon$ and $|\lambda_1-\lambda_2|$, respectively. See text for details.}
\end{figure*}

Suppose that $\pt_1$, $\pt_2$, and $q$ are PMFs on the common finite alphabet $\dom{X}$ and let $\lambda_1,\lambda_2\in[0, 1]$. We define
\begin{equation}
  p_1 := \lambda_1 \pt_1 + (1-\lambda_1) q \quad \text{and}\quad
 p_2 := \lambda_2 \pt_2 + (1-\lambda_2) q
\end{equation}
i.e., the two distributions are mixtures of a common and a potentially different distribution. We consider the symmetric JS divergence between these distributions, abbreviating $\jsd{p_1}{p_2}{\frac{1}{2}}=:\sjsd{p_1}{p_2}$. Our first observation is negative.

\begin{table}[t]
 \centering
 \caption{Simulation Setting: $\dom{X}=\{1,2,3,4,5,6\}$}\label{tab:simulation}
 \begin{tabular}{c|cccccc}
  $x$ & 1 & 2 & 3 & 4 & 5 & 6\\
  \hline
  $\pt_1(x)$ & 1 & 0 & 0 & 0 & 0 & 0 \\
  $\pt_2(x)$ & $1-\varepsilon$ & $\varepsilon$ & 0 & 0 & 0 & 0 \\
  $q(x)$ & 0.5 & 0.4& 0.025& 0.025& 0.025& 0.025
 \end{tabular}
\end{table}

\begin{observation}
 $\sjsd{p_1}{p_2}$ is neither monotonic in the mixture proportions $\lambda_1$ and $\lambda_2$, nor in the difference between mixture proportions $|\lambda_1-\lambda_2|$, nor in the divergence between $\pt_1$ and $\pt_2$. 
\end{observation}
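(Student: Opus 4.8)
\emph{Proof plan.} The statement is negative, so the plan is to refute all three monotonicity claims simultaneously with one family of explicit finite counterexamples. I would work on the three-letter alphabet $\dom{X}=\{1,2,3\}$, take $q$ to be the uniform PMF and $\pt_1$ the point mass on the symbol $1$, so that $p_1=\lambda_1\pt_1+(1-\lambda_1)q$ is pinned down by $\lambda_1$ alone. For $\lambda_1,\lambda_2<1$ all the mixtures involved have full support, so the convention that $\dom{X}$ equals the support of $\frac{1}{2}(p_1+p_2)$ is automatically satisfied and needs no further attention.

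The engine of the construction is the elementary remark that, whenever $\lambda_1<\lambda_2$,
\[
 \pt_2^\star \;:=\; \frac{\lambda_1}{\lambda_2}\,\pt_1 \;+\; \frac{\lambda_2-\lambda_1}{\lambda_2}\,q
\]
is a genuine convex combination of $\pt_1$ and $q$, hence a PMF on $\dom{X}$, and by construction $\lambda_2\pt_2^\star+(1-\lambda_2)q=\lambda_1\pt_1+(1-\lambda_1)q$; that is, choosing $\pt_2=\pt_2^\star$ forces $p_2=p_1$ and hence $\sjsd{p_1}{p_2}=0$. At the same time $\pt_2^\star\neq\pt_1$ (because $\lambda_1<\lambda_2$ and $q\neq\pt_1$), so $\sjsd{\pt_1}{\pt_2^\star}>0$, and $|\lambda_1-\lambda_2|>0$. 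Thus this configuration is an exact zero of $\sjsd{p_1}{p_2}$ that we can place strictly between configurations of strictly positive value, which is exactly the lever that breaks monotonicity in each parameter.

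Concretely I would argue as follows. \emph{(i) Non-monotonicity in $\sjsd{\pt_1}{\pt_2}$:} fix $\lambda_1<\lambda_2$ in $(0,1)$ and compare the three admissible choices $\pt_2=\pt_1$, $\pt_2=\pt_2^\star$, $\pt_2=$ the point mass on symbol $2$; the corresponding values of $\sjsd{\pt_1}{\pt_2}$ are $0<\sjsd{\pt_1}{\pt_2^\star}<\log 2$ (the last being the maximal value of the JS divergence, attained because $\pt_1$ and the point mass on symbol $2$ have disjoint support; the middle one lying strictly below because $\pt_2^\star$ still charges symbol $1$), whereas the corresponding values of $\sjsd{p_1}{p_2}$ read positive, $0$, positive, hence are neither non-decreasing nor non-increasing. \emph{(ii) Non-monotonicity in $\lambda_2$ and in $|\lambda_1-\lambda_2|$:} fix $\pt_1$, $\pt_2:=\pt_2^\star$ and $\lambda_1$, and let $\lambda_2$ run through $\lambda_1$, $\lambda_2^\star$ (the value used to define $\pt_2^\star$), and $1$; then $\sjsd{p_1}{p_2}$ equals positive, $0$, positive while $|\lambda_1-\lambda_2|$ takes the strictly increasing values $0<\lambda_2^\star-\lambda_1<1-\lambda_1$. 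Non-monotonicity in $\lambda_1$ then comes for free, since the whole setup is invariant under interchanging the indices $1$ and $2$, an operation that leaves $\sjsd{p_1}{p_2}$ unchanged.

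The remaining work is routine: checking $p_1\neq p_2$ in each of the sandwiching configurations --- immediate from $\pt_1\neq\pt_2$ or $\lambda_1\neq\lambda_2$ together with $\pt_1\neq q$, after which $\sjsd{p_1}{p_2}>0$ because the JS divergence vanishes only when its two arguments coincide --- and confirming $0<\sjsd{\pt_1}{\pt_2^\star}<\log 2$. I expect the only real obstacle to be a logical rather than a computational one: ``non-monotone'' must be witnessed, separately for each of the three quantities, by an ordered triple of admissible configurations on which $\sjsd{p_1}{p_2}$ is at once not non-decreasing and not non-increasing, and the cleanest device for securing both failures in a single stroke is to wedge an exact zero of $\sjsd{p_1}{p_2}$ between two positive values --- which is precisely what $\pt_2^\star$ delivers. (One could alternatively just read the three failures off the numerical experiments obtained with the setting of Table~\ref{tab:simulation}, but the construction above is self-contained and stable under small perturbations of all parameters.)
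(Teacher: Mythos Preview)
Your proposal is correct, and it takes a genuinely different route from the paper. The paper does not prove Observation~1 formally: it establishes the three failures by numerical example, reading them off the plots produced from the setting of Table~\ref{tab:simulation} (fixing one of $\lambda_1,\lambda_2$ and scanning the other, or fixing both and scanning $\varepsilon$). Your argument, by contrast, is purely analytical: the key device is the observation that for $\lambda_1<\lambda_2$ the choice $\pt_2^\star=\tfrac{\lambda_1}{\lambda_2}\pt_1+\tfrac{\lambda_2-\lambda_1}{\lambda_2}q$ forces $p_1=p_2$ exactly, so $\sjsd{p_1}{p_2}=0$ at an interior point of each parameter range, and you then sandwich this zero between two strictly positive values. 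This buys you a proof that needs no computation and is manifestly robust under perturbation, whereas the paper's approach, though less rigorous as a proof, additionally conveys quantitative information about where the minima lie and how $\sjsd{p_1}{p_2}$ behaves away from them. One minor remark: you assert full support ``for $\lambda_1,\lambda_2<1$'' but then use $\lambda_2=1$ in part~(ii); this is harmless because in that case $p_2=\pt_2^\star$ still has full support (its $q$-coefficient is $1-\lambda_1/\lambda_2^\star>0$), but it would be worth saying so explicitly.
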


While obviously $\sjsd{p_1}{p_2}=0$ if $\lambda_1=\lambda_2=0$, it is not true that increasing the mixture proportion of one distribution relative to the other increases the symmetric JS divergence. An intuitive explanation for this is that if $\pt_1$ and $\pt_2$ are similar, then the mixture proportions $\lambda_1$ and $\lambda_2$ should also be similar to minimize the symmetric JS divergence. 

Consider for example the setting in Table~\ref{tab:simulation} with $\varepsilon=0.3$. Fig.~\ref{fig:gen_imshow} shows the JS divergence as a function of the mixture proportions. Evaluating this plot at specific lines yields Fig.~\ref{fig:gen_curves}. One can see that, fixing $\lambda_1=0.3$, $\sjsd{p_1}{p_2}$ achieves its minimum for $\lambda_2\approx 0.5$. Similarly, for a fixed $\lambda_2=0.7$, the JS divergence is minimized for $\lambda_1\approx 0.3$. Consequently, for these settings $\sjsd{p_1}{p_2}$ is not monotonic in the mixture proportions.

Similar considerations hold for the difference between mixture proportions and the divergence between $\pt_1$ and $\pt_2$. For example, apropriately setting $\lambda_1$ and $\lambda_2$ can compensate the effect of $\pt_1$ and $\pt_2$ being unequal. This situation is depicted in Fig.~\ref{fig:relativeSize} where Fig.~\ref{fig:gen_imshow} is evaluated at $\lambda_2=0.7$ and at various values of $\lambda_1\in[0,0.7]$. It can be seen that the optimal value of $\lambda_1$ in this range is not $0.7$ (for which $|\lambda_1-\lambda_2|=0$), but close to $0.3$. 

In analogy, $\pt_1$ and $\pt_2$ being different can compensate the effect of a difference between $\lambda_1$ and $\lambda_2$. Figure~\ref{fig:error} displays this behavior for the example in Table~\ref{tab:simulation}. As it can be seen, for $(\lambda_1,\lambda_2)=(0.3,0.7)$, the symmetric JS divergence is not minimized for $\varepsilon=0$ in which case $\pt_1=\pt_2$, but for $\varepsilon\approx 0.2$.

We next present a few positive results regarding the monotonicity of $\sjsd{p_1}{p_2}$ and the behavior of $\sjsd{p_1}{p_2}$ in case the supports of $\tilde{p}_i$ and $q$ are disjoint. The proofs are deferred to Section~\ref{sec:proofs}.

Inspecting Fig.~\ref{fig:gen_imshow} suggests that $\sjsd{p_1}{p_2}$ increases as $(\lambda_1,\lambda_2)$ increase jointly along a line through the origin. This behavior is also displayed in Figs.~\ref{fig:gen_curves} and~\ref{fig:PnonQ_curves}. We establish it as a general fact in the following observation.

\begin{observation}\label{obs:increase}
 If $(\lambda_1,\lambda_2)=(\lambda,\alpha\lambda)$ for some $\alpha>0$, then $\sjsd{p_1}{p_2}$ increases monotonically with $\lambda$.
\end{observation}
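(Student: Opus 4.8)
The plan is to combine the joint convexity of the symmetric JS divergence with a self-similarity of the parametrization $(\lambda_1,\lambda_2)=(\lambda,\alpha\lambda)$ along rays through the origin.

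First I would fix $0\le\lambda<\lambda'\le\min\{1,1/\alpha\}$ (so that both parameter pairs are admissible, i.e.\ $p_2$ remains a PMF) and write $p_{i,\lambda}$, $p_{i,\lambda'}$ for the mixtures induced by $\lambda$ and $\lambda'$, respectively. Setting $t:=\lambda/\lambda'\in[0,1]$, elementary algebra gives
\begin{equation*}
 p_{1,\lambda}=t\,p_{1,\lambda'}+(1-t)\,q\qquad\text{and}\qquad p_{2,\lambda}=t\,p_{2,\lambda'}+(1-t)\,q,
\end{equation*}
i.e., \emph{both} distributions at the smaller parameter are the \emph{same} convex combination (with weight $t$) of their counterparts at the larger parameter and the common distribution $q$. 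The case $\lambda'=0$ is trivial, since then $\lambda=0$ and both divergences vanish.

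Second I would invoke joint convexity of $\sjsd{\cdot}{\cdot}$, which follows from the joint convexity of $\kld{\cdot}{\cdot}$ (a standard fact, see \cite{Cover_Information}): the map $(r_1,r_2)\mapsto\bigl(r_i,\tfrac12 r_1+\tfrac12 r_2\bigr)$ is affine, so each summand $\kld{r_i}{\tfrac12 r_1+\tfrac12 r_2}$ in the definition~\eqref{eq:JSD} of $\sjsd{r_1}{r_2}$ is jointly convex in $(r_1,r_2)$, and hence so is their sum. Applying this to the convex combination above and using $\sjsd{q}{q}=0$ together with $\sjsd{\cdot}{\cdot}\ge0$ yields
\begin{equation*}
 \sjsd{p_{1,\lambda}}{p_{2,\lambda}}\le t\,\sjsd{p_{1,\lambda'}}{p_{2,\lambda'}}+(1-t)\,\sjsd{q}{q}\le\sjsd{p_{1,\lambda'}}{p_{2,\lambda'}},
\end{equation*}
which is exactly the claimed monotonicity in $\lambda$.

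I do not anticipate a genuine obstacle here; the only points requiring care are the bookkeeping of the admissible range (one needs $\alpha\lambda\le1$) and the degenerate endpoint $\lambda'=0$. A more computational alternative would be to differentiate $g(\lambda):=\sjsd{p_1}{p_2}=\ent{p_M}-\tfrac12\ent{p_1}-\tfrac12\ent{p_2}$ directly, noting that the $+1$ terms produced by differentiating $x\log x$ cancel because the derivatives of the PMFs sum to zero, and then to argue $g'(\lambda)\ge0$; but this essentially re-proves a convexity statement restricted to the line, so the route above is cleaner. If in addition $\pt_1\ne q$ or $\pt_2\ne q$, strict convexity of $\kld{\cdot}{\cdot}$ renders the first inequality strict for $t<1$, so the monotonicity is in fact strict on $(0,\min\{1,1/\alpha\}]$.
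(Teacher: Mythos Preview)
Your argument is correct and takes a genuinely different route from the paper. The paper follows the computational alternative you sketch at the end: it differentiates $g(\lambda):=\sjsd{p_1}{p_2}$ directly and, after the $+1$ terms cancel, obtains for each $x\in\dom{X}$ an expression of the form $f\bigl(\tfrac{r_1+r_2}{2}\bigr)-\tfrac12 f(r_1)-\tfrac12 f(r_2)$ with $f(r)=r\log(\lambda r+q(x))$, $r_1=\pt_1(x)-q(x)$, $r_2=\alpha(\pt_2(x)-q(x))$; convexity of this one-variable function then forces each summand to have the right sign. You instead use the joint convexity of $\sjsd{\cdot}{\cdot}$ as a single black-box fact together with the self-similarity $p_{i,\lambda}=t\,p_{i,\lambda'}+(1-t)q$. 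Your route is shorter and more conceptual, and it actually delivers a bit more: from the intermediate inequality you get $\sjsd{p_{1,\lambda}}{p_{2,\lambda}}\le(\lambda/\lambda')\,\sjsd{p_{1,\lambda'}}{p_{2,\lambda'}}$, so $\lambda\mapsto g(\lambda)/\lambda$ is nondecreasing as well. The paper's route, by contrast, is fully self-contained and does not rely on the joint convexity of Kullback--Leibler divergence as an external ingredient. One small caveat on your closing remark: Kullback--Leibler divergence is not strictly jointly convex (it vanishes identically on the diagonal), so the strictness claim would need a more careful argument than a bare appeal to ``strict convexity of $\kld{\cdot}{\cdot}$''.
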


Intuitively, if $\pt_1=\pt_2$, then the symmetric JS divergence between $p_1$ and $p_2$ should increase if the difference between mixture proportions increases.

\begin{observation}\label{obs:samedist}
 If $\pt_1=\pt_2$, then $\sjsd{p_1}{p_2}$ increases monotonically with $|\lambda_1-\lambda_2|$.
\end{observation}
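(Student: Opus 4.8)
The plan is to reduce the claim to a one-dimensional statement about concave functions. Since $\sjsd{\cdot}{\cdot}$ is symmetric in its arguments, assume without loss of generality that $\lambda_1\le\lambda_2$, and write $\pt_1=\pt_2=:\pt$. It is convenient to reparametrize the pair $(\lambda_1,\lambda_2)$ by its mean $m:=(\lambda_1+\lambda_2)/2$ and half-difference $s:=(\lambda_2-\lambda_1)/2=\tfrac12|\lambda_1-\lambda_2|\ge 0$, so that $\lambda_1=m-s$, $\lambda_2=m+s$, and $s$ ranges over $[0,\min\{m,1-m\}]$. The precise statement I will prove is that, for every fixed $m$, $\sjsd{p_1}{p_2}$ is nondecreasing in $s$; note that fixing $m$ also fixes $p_M=\tfrac12(p_1+p_2)=m\pt+(1-m)q$, so this is the natural reading of ``increases monotonically with $|\lambda_1-\lambda_2|$''.

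First I would put $\sjsd{p_1}{p_2}$ in a per-symbol form. Writing $g(u):=-u\log u$, which is continuous and concave on $[0,1]$ with $g(0)=0$, we have $\ent{r}=\sum_{x\in\dom{X}}g(r(x))$, so the entropy form in~\eqref{eq:JSD} gives
\[
 \sjsd{p_1}{p_2}=\sum_{x\in\dom{X}}\Bigl(g(p_M(x))-\tfrac12 g(p_1(x))-\tfrac12 g(p_2(x))\Bigr).
\]
Because $\pt_1=\pt_2$, each coordinate depends affinely on the corresponding proportion: $p_i(x)=q(x)+\lambda_i\bigl(\pt(x)-q(x)\bigr)$ and $p_M(x)=q(x)+m\bigl(\pt(x)-q(x)\bigr)$, the arguments remaining in $[0,1]$. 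Defining $h_x(t):=g\bigl(q(x)+t(\pt(x)-q(x))\bigr)$, which is concave on $[0,1]$ as the composition of the concave $g$ with an affine map, I obtain
\[
 \sjsd{p_1}{p_2}=\sum_{x\in\dom{X}}\Bigl(h_x(m)-\tfrac12 h_x(m-s)-\tfrac12 h_x(m+s)\Bigr),
\]
in which only the last two terms depend on $s$.

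The remaining step is the elementary fact that, for a concave function $h$ on an interval and a fixed midpoint $m$, the value $\tfrac12 h(m-s)+\tfrac12 h(m+s)$ is nonincreasing in $s\ge 0$. To see this, for $0\le s\le s'$ I would write both $m-s$ and $m+s$ as convex combinations of $m-s'$ and $m+s'$, with weights $\mu:=\frac{s'-s}{2s'}$ and $1-\mu$ (in opposite orders), apply the concavity of $h$ to each of the two combinations, and add the resulting inequalities; the cross terms combine to yield $h(m-s)+h(m+s)\ge h(m-s')+h(m+s')$. Applying this to each $h_x$ and summing over $x\in\dom{X}$ shows that $\sjsd{p_1}{p_2}$ is nondecreasing in $s$, which is the assertion. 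A completely analogous argument (comparing increments of a concave function over nested intervals) establishes the same monotonicity if, instead of the mean, one of $\lambda_1,\lambda_2$ is held fixed while the other is moved away from it.

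I do not anticipate a substantive obstacle; the care needed is organizational rather than conceptual. The main points to get right are: stating precisely what is held fixed (so that ``monotone in $|\lambda_1-\lambda_2|$'' is meaningful, since without any constraint on the pair it is false); delimiting the admissible range of $s$; justifying the per-symbol decomposition of $\sjsd{p_1}{p_2}$ under the stated support conventions; and observing that $g$ is merely continuous at $0$, which is why I prefer the convex-combination argument over differentiating $\sjsd{p_1}{p_2}$ with respect to $s$.
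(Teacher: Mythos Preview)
Your argument is correct and genuinely different from the paper's. The paper fixes $\lambda:=\min\{\lambda_1,\lambda_2\}$, differentiates $\sjsd{p_1}{p_2}$ with respect to $\Delta\lambda:=|\lambda_1-\lambda_2|$, and shows the derivative is nonnegative via the elementary bounds $x-1\ge\log x\ge(x-1)/x$, treating the boundary cases $\lambda_i\in\{0,1\}$ separately. You instead fix the midpoint $m$ (so that $p_M$ is fixed), write the divergence as a sum of per-symbol terms $h_x(m)-\tfrac12 h_x(m-s)-\tfrac12 h_x(m+s)$ with each $h_x$ concave, and invoke the derivative-free fact that $h(m-s)+h(m+s)$ is nonincreasing in $s$ for concave $h$. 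Your route is shorter and avoids both the calculus and the separate boundary analysis, since $g(u)=-u\log u$ is continuous and concave on all of $[0,1]$; it also makes explicit a point the paper leaves implicit, namely that ``monotone in $|\lambda_1-\lambda_2|$'' only makes sense once one fixes a one-parameter path, and you cover both the fixed-midpoint and the fixed-endpoint readings. The paper's approach, on the other hand, yields an explicit nonnegative lower bound on the derivative, which could be useful if one wanted a quantitative statement.
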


We finally evaluate the scenario where the supports of $\pt_1$ and $\pt_2$ are disjoint from the support of $q$. Suppose we draw a sample of either $p_1$ or $p_2$ and it is our task to determine from which distribution it was drawn. We assume to know the supports of $\pt_1$, $\pt_2$, and $q$. If the drawn sample is from the support of $q$, then there is now way to distinguish between $p_1$ and $p_2$; one only has the chance to distinguish $p_1$ from $p_2$ if the drawn sample is from the supports of $\pt_1$ and $\pt_2$. The following observation shows that in this case the JS divergence does not depend on the PMF $q$.

\begin{observation}\label{obs:disjointsupports}
If $\supp{\pt_i}\cap \supp{q}=\emptyset$ for $i=1,2$, then
\begin{multline}\label{eq:PnonQ}
 \sjsd{p_1}{p_2} = \sjsd{(\lambda_1,1-\lambda_1)}{(\lambda_2,1-\lambda_2)}\\
 + \frac{\lambda_1+\lambda_2}{2} \jsd{\pt_1}{\pt_2}{\frac{\lambda_1}{\lambda_1+\lambda_2}}.
\end{multline}
\end{observation}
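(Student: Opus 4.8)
The plan is to exploit the disjointness of the supports to represent each of $p_1$, $p_2$, and their average as a two-stage experiment (first decide whether the realization lies in $\supp{q}$ or in the combined support of $\pt_1,\pt_2$, then sample accordingly), and then apply the grouping property of entropy to the entropy form of JS divergence in \eqref{eq:JSD}.

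First I would set $\dom{S}:=\supp{\pt_1}\cup\supp{\pt_2}$ and $\dom{T}:=\supp{q}$, which are disjoint by hypothesis, and abbreviate $\mu:=\tfrac{\lambda_1+\lambda_2}{2}$, $p_M:=\tfrac12(p_1+p_2)$. Straight from the definitions, on $\dom{T}$ one has $p_1=(1-\lambda_1)q$, $p_2=(1-\lambda_2)q$, $p_M=(1-\mu)q$, and on $\dom{S}$ one has $p_1=\lambda_1\pt_1$, $p_2=\lambda_2\pt_2$, $p_M=\mu\pt_M$ with
\begin{equation*}
 \pt_M:=\frac{\lambda_1}{\lambda_1+\lambda_2}\pt_1+\frac{\lambda_2}{\lambda_1+\lambda_2}\pt_2 .
\end{equation*}
Hence $p_i$ puts mass $\lambda_i$ on $\dom{S}$, where its conditional PMF is $\pt_i$, and mass $1-\lambda_i$ on $\dom{T}$, where its conditional PMF is $q$; likewise $p_M$ puts mass $\mu$ on $\dom{S}$ (conditional PMF $\pt_M$) and $1-\mu$ on $\dom{T}$ (conditional PMF $q$).

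Next I would invoke the grouping property of entropy for this two-block partition, which gives $\ent{p_i}=h_2(\lambda_i)+\lambda_i\ent{\pt_i}+(1-\lambda_i)\ent{q}$ for $i=1,2$ and $\ent{p_M}=h_2(\mu)+\mu\ent{\pt_M}+(1-\mu)\ent{q}$. Plugging these into $\sjsd{p_1}{p_2}=\ent{p_M}-\tfrac12\ent{p_1}-\tfrac12\ent{p_2}$, the $\ent{q}$ terms cancel because $(1-\mu)-\tfrac12(1-\lambda_1)-\tfrac12(1-\lambda_2)=0$, leaving a ``Bernoulli part'' $h_2(\mu)-\tfrac12 h_2(\lambda_1)-\tfrac12 h_2(\lambda_2)$ and a ``$\pt$ part'' $\mu\ent{\pt_M}-\tfrac12\lambda_1\ent{\pt_1}-\tfrac12\lambda_2\ent{\pt_2}$. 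Since $h_2(x)=\ent{(x,1-x)}$ and $(\mu,1-\mu)$ is the equal-weight mixture of $(\lambda_1,1-\lambda_1)$ and $(\lambda_2,1-\lambda_2)$, the Bernoulli part is $\sjsd{(\lambda_1,1-\lambda_1)}{(\lambda_2,1-\lambda_2)}$ by \eqref{eq:JSD}. Factoring $\mu$ out of the $\pt$ part and using $\tfrac{\lambda_i}{2\mu}=\tfrac{\lambda_i}{\lambda_1+\lambda_2}$ together with $\pt_M=\tfrac{\lambda_1}{\lambda_1+\lambda_2}\pt_1+\tfrac{\lambda_2}{\lambda_1+\lambda_2}\pt_2$, the remaining bracket $\ent{\pt_M}-\tfrac{\lambda_1}{\lambda_1+\lambda_2}\ent{\pt_1}-\tfrac{\lambda_2}{\lambda_1+\lambda_2}\ent{\pt_2}$ equals $\jsd{\pt_1}{\pt_2}{\frac{\lambda_1}{\lambda_1+\lambda_2}}$, again by \eqref{eq:JSD}, so the $\pt$ part is $\tfrac{\lambda_1+\lambda_2}{2}\jsd{\pt_1}{\pt_2}{\frac{\lambda_1}{\lambda_1+\lambda_2}}$. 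Adding the two parts yields \eqref{eq:PnonQ}.

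The computation itself is routine; the points that need attention are (i) checking that the conditional law of $p_M$ on $\dom{S}$ is the mixture $\pt_M$ with weight $\tfrac{\lambda_1}{\lambda_1+\lambda_2}$ and not $\tfrac12$, and (ii) the degenerate cases: if some $\lambda_i\in\{0,1\}$ one block carries zero mass, which is harmless under the convention $0\log 0=0$ built into the grouping identity, and if $\lambda_1=\lambda_2=0$ the weight $\tfrac{\lambda_1}{\lambda_1+\lambda_2}$ is undefined but its prefactor $\tfrac{\lambda_1+\lambda_2}{2}$ vanishes and both sides of \eqref{eq:PnonQ} are $0$. An equivalent derivation avoiding the entropy grouping is to split the defining sum of each $\kld{p_i}{p_M}$ over $\dom{S}$ and $\dom{T}$: on $\dom{T}$ the likelihood ratio $p_i/p_M$ is the constant $(1-\lambda_i)/(1-\mu)$, and on $\dom{S}$ it factors as $(\lambda_i/\mu)\,(\pt_i/\pt_M)$; averaging the two KL terms with weights $\tfrac12$ then reproduces \eqref{eq:PnonQ}.
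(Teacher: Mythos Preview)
Your argument is correct and rests on the same idea as the paper's proof: split everything over the two disjoint blocks $\dom{S}=\supp{\pt_1}\cup\supp{\pt_2}$ and $\dom{T}=\supp{q}$, so that each of $p_1,p_2,p_M$ factors into a Bernoulli ``which block'' part and a conditional PMF on that block. The only cosmetic difference is the form of \eqref{eq:JSD} you start from: you use the entropy representation together with the grouping property, whereas the paper works directly with $\kld{p_i}{p_M}$, splitting the sum over $\supp{\pt_i}$ and $\supp{q}$ and factoring the log-ratio as $\log\frac{\lambda_i}{\mu}+\log\frac{\pt_i}{\pt_M}$ on the first block and $\log\frac{1-\lambda_i}{1-\mu}$ on the second. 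Your final paragraph is precisely that computation, so the two derivations coincide.
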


An example for this scenario is depicted in Figs.~\ref{fig:PnonQ_imshow} and~\ref{fig:PnonQ_curves}. The simulation setting coincides with the one in Table~\ref{tab:simulation}, with $\varepsilon=0.3$ and $q$ replaced by a uniform distribution on $\{3,4,5,6\}$. Obviously, since the supports of $\pt_i$ and $q$, $i=1,2$ are disjoint, $\sjsd{p_1}{p_2}$ achieves its maximum at $(\lambda_1,\lambda_2)=(0,1)$ and $(\lambda_1,\lambda_2)=(1,0)$.

Now suppose $\pt_1=\pt_2$. In this case, $p_1$ and $p_2$ can only be distinguished if they mix $\pt_1$ and $\pt_2$ with different proportions (since otherwise $p_1=p_2$). This is accounted for in the first term of~\eqref{eq:PnonQ}. Next, suppose that $\lambda_1=\lambda_2=\lambda$. The only chance to distinguish $p_1$ and $p_2$ results from $\pt_1$ and $\pt_2$ being different, which is accounted for in the second term of~\eqref{eq:PnonQ}. Moreover, in this case $p_1$ and $p_2$ can be distinguished more easily if the common mixture proportion is large. Since finally in this case we have $\sjsd{p_1}{p_2}=\lambda\sjsd{\pt_1}{\pt_2}$, one can see that $\sjsd{p_1}{p_2}$ increases linearly with $\lambda$ if $\lambda_1=\lambda_2=\lambda$ (see Fig.~\ref{fig:PnonQ_curves}). In Fig.~\ref{fig:PnonQ_curves} one can moreover see that, in the more general setting of $\lambda_1\neq\lambda_2$ and $\pt_1\neq \pt_2$, a monotonic increase is not observable. This is due to the nontrivial interplay between the two terms in~\eqref{eq:PnonQ}.

\section{Proofs}\label{sec:proofs}
Suppose that $r(\cdot,\lambda)$ is a PMF parameterized by $\lambda$. Simple calculus shows that
\begin{equation}\label{eq:derivativeEntropy}
 \frac{\diff}{\diff \lambda} \ent{r(\cdot,\lambda)} = -\sum_{x\in\supp{r}}  \frac{\diff}{\diff \lambda}r(x,\lambda) \left[1+\log r(x,\lambda) \right].
\end{equation}
Let further
\begin{equation}\label{eq:mixturedistribution}
 p_M := \frac{1}{2}p_1 + \frac{1}{2}p_2 = \frac{\lambda_1}{2}\pt_1 +\frac{\lambda_2}{2}\pt_2 + \frac{2-\lambda_1-\lambda_2}{2} q.
\end{equation}

\begin{proof}[Proof of Observation~\ref{obs:increase}]
With~\eqref{eq:JSD} and~\eqref{eq:derivativeEntropy} we obtain
\begin{align}
 &-\frac{\diff}{\diff\lambda} \sjsd{p_1}{p_2}\notag\\
 &=\sum_{x\in\dom{X}}\left[\frac{\pt_1(x)}{2}+\frac{\alpha\pt_2(x)}{2}-\frac{1+\alpha}{2}q(x)\right] \left(1+\log p_M(x)\right)\notag\\
 &\quad\quad - \frac{1}{2}\sum_{x\in\supp{p_1}} \left[\pt_1(x)-q(x)\right] \left(1+\log p_1(x)\right)\notag\\
 &\quad\quad - \frac{1}{2}\sum_{x\in\supp{p_2}} \left[\alpha\pt_2(x)-\alpha q(x)\right] \left(1+\log p_2(x)\right)\notag\\
 &=\sum_{x\in\dom{X}}\left[\frac{\pt_1(x)}{2}+\frac{\alpha\pt_2(x)}{2}-\frac{1+\alpha}{2}q(x)\right] \log p_M(x)\notag\\
 &\quad\quad - \frac{1}{2} \left[\pt_1(x)-q(x)\right] \log p_1(x)\notag\\
 &\quad\quad - \frac{1}{2} \left[\alpha\pt_2(x)-\alpha q(x)\right]\log p_2(x)\label{eq:obs_increase:line1}
\end{align}
where the last equality is obtained by continuous extension of the logarithm to obtain $0\log 0:=0$. Observe further that
\begin{align*}
 &\left[\alpha\pt_2(x)-\alpha q(x)\right]\log p_2(x)\\ 
 &=  \left[\alpha\pt_2(x)-\alpha q(x)\right]\log\left(\alpha\lambda\pt_2(x)+(1-\alpha\lambda)q(x)\right)\\ 
 &=  \left[\alpha\pt_2(x)-\alpha q(x)\right]\log\left(\lambda(\alpha\pt_2(x)-\alpha q(x))+q(x)\right)
\end{align*}
i.e., it is of shape $r\log(\lambda r+q)$. It can be shown that this function is convex in $r$ if $r$, $q$, and $\lambda$ are nonnegative. It thus follows that each summand in~\eqref{eq:obs_increase:line1} is nonpositive. This completes the proof.
\end{proof}

\begin{proof}[Proof of Observation~\ref{obs:samedist}]
We first assume that $\lambda_1,\lambda_2\in(0,1)$. We treat the remaining cases separately. Let w.l.o.g.\ $\lambda:=\min\{\lambda_1, \lambda_2\}$ and $\Delta\lambda:=|\lambda_2-\lambda_1|$. We obtain that
 \begin{multline}
  \sjsd{p_1}{p_2} = \ent{\frac{2\lambda+\Delta\lambda}{2}\pt+\frac{2-2\lambda-\Delta\lambda}{2}q}\\
  -\frac{1}{2}\ent{\lambda\pt+(1-\lambda)q}-\frac{1}{2}\ent{(\lambda+\Delta\lambda)\pt+(1-\lambda-\Delta\lambda)q}
 \end{multline}
and thus 
 \begin{align}
  &\frac{\diff}{\diff\Delta\lambda} \sjsd{p_1}{p_2}\notag\\
  &= \sum_{x\in\dom{X}}\left[\frac{\pt(x)-q(x)}{2}\right]\notag\\
  &\quad\quad\times\log\left(\frac{(\lambda+\Delta\lambda)\pt(x)+(1-\lambda-\Delta\lambda)q(x)}{(\lambda+\frac{\Delta\lambda}{2})\pt(x)+(1-\lambda-\frac{\Delta\lambda}{2})q(x)}\right)\label{eq:samedist:line1}\\
  &= \sum_{x\in\dom{X}}\left[\frac{\pt(x)-q(x)}{2}\right]\notag\\
  &\quad\quad\times\log\left(1+\frac{\Delta\lambda}{2}\frac{\pt(x)-q(x)}{(\lambda+\frac{\Delta\lambda}{2})\pt(x)+(1-\lambda-\frac{\Delta\lambda}{2})q(x)}\right)\label{eq:samedist:line2}\\
  &= \sum_{x\in\dom{X}}\left[\frac{\pt(x)-q(x)}{2}\right]\log\left(1+\frac{\Delta\lambda}{2}\frac{\pt(x)-q(x)}{p_M(x)}\right)\label{eq:samedist:line4}.
 \end{align}  
Note that the argument of the logarithm in~\eqref{eq:samedist:line4} is equivalent to the argument of the logarithm in~\eqref{eq:samedist:line1}, which is positive for $x\in\dom{X}$ if $\lambda_1,\lambda_2\in(0,1)$. We can therefore break the sum into to parts, one in which $\pt\ge q$ and one in which $\pt<q$, and bound the logarithm by $x-1 \ge \log x \ge \frac{x-1}{x}$. Thus we obtain
\begin{align}
&\frac{\diff}{\diff\Delta\lambda} \sjsd{p_1}{p_2}\notag\\
&\ge \sum_{x{:}\ \pt(x)\ge q(x)}\left[\frac{\pt(x)-q(x)}{2}\right]^2 \frac{\frac{\Delta\lambda}{p_M(x)}}{1+\frac{\Delta\lambda}{2}\frac{\pt(x)-q(x)}{p_M(x)}}\notag\\
&\quad\quad + \sum_{x{:}\ \pt(x)< q(x)} \left[\frac{\pt(x)-q(x)}{2}\right]^2 \frac{\Delta\lambda}{p_M(x)}\\
&\ge 0
\end{align}
where the last line follows since each summand is nonnegative.

If $(\lambda_1,\lambda_2)=(0,0)$, then $\sjsd{p_1}{p_2}=0$ and~\eqref{eq:PnonQ} holds. If $(\lambda_1,\lambda_2)=(1,1)$, then~\eqref{eq:PnonQ} holds trivially. If $(\lambda_1,\lambda_2)=(1,0)$, then the supports of $p_1$ and $p_2$ are disjoint and $\sjsd{p_1}{p_2}=\log(2)$, i.e., it achieves its maximum value. The same value is achieved by the first term in~\eqref{eq:PnonQ}, while the second term can be shown to be zero in this case. This completes the proof.
\end{proof}

\begin{proof}[Proof of Observation~\ref{obs:disjointsupports}]
We get
\begin{align*}
 &\kld{p_1}{p_M}\notag\\
&= \sum_{x\in\supp{\pt_1}} \lambda_1\pt_1(x) \log\frac{\lambda_1\pt_1(x)}{\frac{\lambda_1}{2}\pt_1(x) +\frac{\lambda_2}{2}\pt_2(x)} \notag\\
&\quad + \sum_{x\in\supp{q}} (1-\lambda_1)q(x) \log \frac{(1-\lambda_1)q(x) }{\frac{2-\lambda_1-\lambda_2}{2} q(x)}\\
&=  \lambda_1 \sum_{x\in\supp{\pt_1}}\pt_1(x) \log\frac{\pt_1(x)}{\frac{\lambda_1}{\lambda_1+\lambda_2}\pt_1(x) +\frac{\lambda_2}{\lambda_1+\lambda_2}\pt_2(x)} \notag\\
&\quad + (1-\lambda_1) \log \frac{(1-\lambda_1) }{\frac{2-\lambda_1-\lambda_2}{2}} + \lambda_1\log\frac{\lambda_1}{\frac{\lambda_1+\lambda_2}{2}}\\
&= \lambda_1 \kld{\pt_1}{\pt} + \kld{(\lambda_1,1-\lambda_1)}{\lambda,1-\lambda)}
\end{align*}
where $\pt\triangleq\frac{\lambda_1}{\lambda_1+\lambda_2}\pt_1 +\frac{\lambda_2}{\lambda_1+\lambda_2}\pt_2$ and where $\lambda\triangleq\frac{\lambda_1+\lambda_2}{2}$. The proof is completed by repeating the same steps for $\kld{p_2}{p_M}$ and inserting it in~\eqref{eq:JSD}.
\end{proof}

\section*{Acknowledgments}
The author thanks Roman Kern, Know-Center GmbH and Institute for Interactive Systems and Data Science, Graz University of Technology, for fruitful discussions.

\bibliographystyle{IEEEtran}
\bibliography{IEEEabrv,references}

\end{document}